\newcommand{\ket}[1]{| #1 \rangle}
\begin{document}

%
\catchline{}{}{}{}{}
%
\title{Quantum discord in the ground state of spin chains}

\author{MARCELO S. SARANDY}

\address{Instituto de F\'isica, Universidade Federal Fluminense, Av. Gal. Milton
Tavares de Souza s/n, Gragoat\'a, 24210-346, Niter\'oi, Rio de Janeiro, Brazil \\
msarandy@if.uff.br}

\author{THIAGO R. DE OLIVEIRA}

\address{Instituto de F\'isica, Universidade Federal Fluminense, Av. Gal. Milton 
Tavares de Souza s/n, Gragoat\'a, 24210-346, Niter\'oi, Rio de Janeiro, Brazil}

\author{LUIGI AMICO}

\address{ CNR-MATIS-IMM $\&$ Dipartimento di Fisica e Astronomia Universit\`a di Catania, C/O ed. 10, viale A. Doria 6,
  95125 Catania, Italy }

\maketitle


\begin{abstract}
The ground state of a quantum spin chain is a natural playground for investigating correlations. 
Nevertheless, not all correlations are genuinely of quantum nature. Here we review the recent progress 
to quantify the 'quantumness' of the correlations throughout the phase diagram of quantum spin 
systems. Focusing to one spatial dimension,  we discuss the behavior of quantum discord close to 
quantum phase transitions. In contrast to the two-spin entanglement,  pairwise discord 
is effectively long-ranged in critical regimes. Besides the features of quantum phase transitions, quantum discord is especially feasible to 
explore the factorization phenomenon, giving rise to  nontrivial ground classical states in quantum 
systems. The effects of spontaneous symmetry breaking are also discussed as well as the identification 
of quantum critical points through correlation witnesses. 
\end{abstract}

\keywords{Quantum Information; Quantum Phase Transitions; Quantum Discord}

\section{Introduction}
Quantum mechanics relies on properties of complex vector spaces, whose elements are associated with wavefunctions. 
Therefore any correlation exploiting the quantum superposition principle (expressing the notion of 
 sum in vector spaces) is indeed of quantum nature. This simple evidence has been exploited as a 
guiding principle to quantify  quantum correlations beyond the generic notion of 'correlation' in a 
condensed matter system. In this context,  quantum spin systems naturally provide quantum correlations, 
especially at $T=0$, where the system is in its ground state. Such correlations are diversely important throughout 
the phase diagram of the system. For example, entanglement, the most famous quantum correlation, 
turned out strong enough close to  quantum phase transitions (QPT)~\cite{Sachdev:book,Continentino:book}, 
but it  can be very small deep in the phase of a highly correlated quantum spin system  where quantum 
fluctuations provide a factorized ground state ~\cite{Kurman:82,Verrucchi:05,Illuminati:08} (see also 
Ref.~\refcite{Amico:08} for a review on entanglement in many-body systems).

A recently devised measure of quantum correlation is the {\it quantum discord} (QD)~\cite{Ollivier:01}, 
providing a  'quantum/classical sieve'  explicitely  based on  the superposition principle of quantum 
mechanics~\cite{Ollivier:01,Vedral:01,Modi:12}. With QD, it was evidenced that indeed there are quantum 
correlations in separable mixed states. This, in turn,  may disclose  new scenarios for the phase diagram 
of  many-body quantum systems. The first computations of QD for spin systems in the thermodynamical limit 
have been carried out by Dillenschneider~\cite{Dillenschneider:08} and Sarandy~\cite{Sarandy:09}, where QD 
between two spins in a mixed state realized by the rest of the system was considered. It turned out that the 
non-analyticities of energy derivatives at QPTs were reflected not only in the two-spin entanglement~\cite {Wu:04} 
and QD~\cite{Dillenschneider:08,Sarandy:09} but also in the classical correlations, establishing all them as useful 
tools to analyze quantum critical phenomena. Further investigations were then performed in a number of many-body 
systems (see, e.g., Refs.~\cite{Chen:10,Maziero:10,Sun:10,Allegra:11,Liu:11,Li:11,Bin:11,Pal:11,Cheng:12,Wang:12,Chen:12}), 
where  entanglement, QD and classical correlations were compared. The bulk of evidence we can rely on so far 
suggests that genuine quantum correlations, including entanglement and QD, exhibit the behavior expected by 
scaling theory for the long-range physics close to QPTs. For the short-range physics, in contrast, a rather 
complex situation emerges because entanglement, QD and standard correlation functions may vanish for different 
ranges~\cite{Osterloh:02,Maziero:12} \footnote{This, in turn,  might be important at low temperature, because 
indeed thermal fluctuations act as infrared cutoff in the system (the QD close to QPT at low temperature is 
reviewed in Ref.~\refcite{Werlang:12} of this focus issue; see also Refs.~\cite{Amico07,Sun:11,Jiang:11,Hassan:10,Li:12}).}. 
This is clearly evidenced at factorizing fields occurring in the ordered phase of the system (characterized by a nonvanishing 
spectroscopic gap). Close to factorization, the entanglement range diverges despite the ordinary correlation functions 
decay exponentially\cite{Amico06}. 

In this article we review the properties of QD throughout the phase diagram of spin systems in a one-dimensional lattice, 
with nearest-neighbor exchange interactions. For these systems, different ground states may be achieved through 
order-disorder QPTs  occurring via the spontaneous symmetry breaking (SSB) mechanism. We consider both the so 
called  {\it thermal states} and the {\it symmetry broken ground states}, enjoying and breaking the symmetries of the Hamiltonian, 
respectively. As long as the system is studied through its energy spectrum,  all thermodynamical observables stay 
unaltered irrespective whether the thermal ground state or the state with broken symmetry are considered. 
Nevertheless, the difference comes in entanglement and QD  (the same argument can be applied to any other measure of quantum 
correlation based on reduced density matrix). Entanglement was analyzed in the SSB scenario in the 
Refs.~\cite{Osborne:02,Olav,Osterloh,Oliveira}. For the analysis of QD, SSB effects have been first accounted in 
Refs.~\cite{Tomasello:11,Tomasello:12}. The article is organized as follows. In section \ref{section_discord} we introduce 
QD. The models we deal with are introduced in the section \ref{section_models}; in there we provide a summary of the 
symmetry breaking mechanism to make the article self contained. In the sections \ref{section_thermal} and \ref{section_broken} 
we review the properties of the QD for thermal and symmetry broken  ground states respectively.  As illustration, 
in section \ref{section_witness}  we discuss a witness for multipartite quantum correlations~\cite{Saguia:11}, for both thermal 
and symmetry broken ground states. In  section \ref{section_conclusions} we draw our conclusions.

\section{Quantum discord}
\label{section_discord}
Let us begin by considering a composite system in a bipartite Hilbert space 
${\cal H} = {\cal H}_{A} \otimes {\cal H}_{B}$. The system is characterized 
by quantum states described by density operators $\rho \in {\cal B}({\cal H})$, 
where ${\cal B}({\cal H})$ is the set of bound, positive-semidefinite operators 
acting on ${\cal H}$ with trace given by ${\textrm{Tr}}\,\rho=1$.
Then, denoting by $\rho$ the density matrix of the composite system $AB$ and by 
$\rho^A$ and $\rho^B$ the density matrices of parts $A$ and $B$, respectively, 
the total (classical + quantum) correlations between parts $A$ and $B$ can be 
provided by the quantum mutual information~\cite{Groisman:05} 
\begin{equation}
I(\rho^A : \rho^B) = S(\rho^A) - S(\rho^A | \rho^B),
\label{qmi1}
\end{equation}
where $S(\rho^A) = -{\textrm{Tr}} \rho^A \log \rho^A$ is the von Neumann entropy for subsystem $A$  
(with the symbol log denoting logarithm at base 2) and
\begin{equation} 
S(\rho^A | \rho^B) = S(\rho) - S(\rho^B)
\label{qce1}
\end{equation}
is the quantum conditional entropy for $A$ given the part $B$.
A remarkable observation 
realized in Ref.~\refcite{Ollivier:01} is that the conditional entropy can be introduced by a different 
approach, which yields a result in the quantum case that differs from Eq.~(\ref{qce1}). Indeed, 
let us consider a measurement performed locally only on part $B$. This measurement can be described 
by a set of projectors $\{B_k\}$. The state of the quantum system, conditioned on the measurement of 
the outcome labelled by $k$, becomes
\begin{equation}
\rho_k = \frac{1}{p_k} \left( I\otimes B_k \right) \rho \left( I\otimes B_k \right), 
\label{rhok}
\end{equation}
where $p_k = {\textrm{Tr}} [ (I\otimes B_k ) \rho ( I\otimes B_k ) ]$ denotes the probability of 
obtaining the outcome $k$ and $I$ denotes the identity operator for the subsystem $A$. The conditional 
density operator given by Eq.~(\ref{rhok}) allows for the following measurement-based definition of 
the quantum conditional entropy:
\begin{equation}
S(\rho|\{B_k\}) = \sum_k p_k S(\rho_k).
\end{equation}
Therefore, the quantum mutual information can also be alternatively defined by
\begin{equation}
J(\rho:\{B_k\}) = S(\rho^A) - S(\rho|\{B_k\}).
\label{qmi2}
\end{equation}
Eqs.~(\ref{qmi1}) and~(\ref{qmi2}) are classically equivalent but they are different in the quantum case. The 
difference between them is due to quantum effects on the correlation between parts $A$ and $B$ and provides  
a measure for the quantumness of the correlation. Indeed, by generalizing the measure set $\{B_k\}$ to a  
positive-operator valued measure (POVM) and maximizing over such POVMs, we can define the classical correlation 
between parts $A$ and $B$ as~\cite{Vedral:01}
\begin{equation}
C(\rho) = \max_{\{B_k\}} J(\rho:\{B_k\}),
\label{ccorrel}
\end{equation}
Then, quantum correlations can be accounted by subtracting Eq.~(\ref{ccorrel}) from Eq.~(\ref{qmi1}), 
which yields
\begin{equation}
Q(\rho) = \min_{\{B_k\}} \left[ I(\rho^A : \rho^B) - J(\rho:\{B_k\}) \right] .
\label{qcorrel}
\end{equation}
Quantum correlations as measured by $Q(\rho)$ define QD. As provided by 
Eq.~(\ref{qcorrel}), it is an asymmetric quantity that measures the quantum correlations between 
subsystems $A$ and $B$ revealed by measurements on part $B$. Then, if QD vanishes, the system is in 
a {\it quantum-classical} state. In order to define {\it classical-classical} states, we can symmetrize 
QD with respect to measurements on each subsystem~\cite{Maziero-2:10,Rulli:11}, requiring the vanishing 
of QD for having total classicality.

Here we apply the above notions to the situation where $A$ and $B$ are individual spins 
of a quantum spin chain. To compute QD, we will adopt the specific case of von Neumann 
local measurements, which are provided by the {\it orthogonal} projectors $\hat{B}_k$, such that 
\begin{equation}
\hat{B}_{k} = V \hat{\Pi}_{k}V^{\dag} \quad,\quad k=0,1
\label{projectors}
\end{equation}
where $\{\hat{\Pi}_{k} \} = \{ | k \rangle \langle k| \}$ define the computational basis and $V \in SU(2)$ is a 
rotation operator. Then, $V$ can be parametrized as 
\begin{equation}
  V = \begin{pmatrix} \cos \frac{\theta}{2} &  \sin \frac{\theta}{2} \, e^{-i\phi} \\[1.1ex]
    - \sin \frac{\theta}{2} \, e^{i\phi} & \cos \frac{\theta}{2}
  \end{pmatrix} \, ,
  \label{parametr_V}
\end{equation}
where $\theta \in[0,\pi]$ and $\phi \in[0,2\pi)$ are respectively the azimuthal and polar axes 
of a qubit over in the  Bloch sphere~\cite{Dillenschneider:08,Sarandy:09}. Hence, QD as given 
by Eq.~(\ref{qcorrel}) can be directly obtained by minimizing over all angles 
$\theta$ and $\phi$. For two spin-1/2 systems, it is shown that the optimal POVM in 
Eq.~\eqref{qcorrel} is a projective measurement (4-projector-elements POVMs)~\cite{dariano,Hamieh:04}. 
Remarkably (see Refs.~\cite{Ali:10,Lu:11,Chen:11}), for the particular case of Bell-diagonal states 
(marginal states are maximally mixed), such optimal POVM can be achieved through von Neumann measurements, 
while for the less restricted case of $Z_2$-symmetric states (X-states), there are examples where optimization 
over POVMs may be inequivalent to von Neumann measurements. In such a situation and also for generic nonsymmetric 
states, which is the case of states with SSB, orthogonal measurements are not more effective than 4-elements 
POVMs~\cite{zaraket}, with von Neumann measurements in Eq.~\eqref{qcorrel} providing just upper bounds for QD 
(as an illustration, see Ref.~\refcite{Amico:12}). 
\section{The models}
\label{section_models}

Magnetic interactions constitute a simple mechanism to produce classical and quantum correlations. 
In this context, we consider the $XYZ$ chain, which is composed by $N$ spin-$1/2$ particles localized 
in a one dimensional lattice, interacting anisotropically along the three spatial directions with 
Heisenberg interaction strengths $J_x, \, J_y, \, J_z$, and subjected to a uniform external field $h$, 
which is governed by the Hamiltonian
\begin{equation}
\label{general-spin}
{\cal {H}} {=}  \frac{1}{2}\sum_{i,j} 
\left[ J_x S_i^x S_{j}^x + J_y S_i^y S_{j}^y + J_z S_i^z S_{j}^z \right]-
h \sum_{i} S_i^z \; ,
\end{equation}
where  $S_i^\alpha$ ($\alpha=x,y,z$) are spin-$1/2$ operators defined on the $i$-th site of the chain. 
Factorization of the ground state and QPTs characterize the phase diagram of the system.
The bulk of evidence at disposal so far indicates that the two phenomena are interconnected, with the 
factorization being a precursor of  QPTs~\cite{Amico:08}.

\subsection{QPTs and symmetry breaking}

QPTs occur when  an external perturbation (like pressure, magnetic field, etc.) causes quantum fluctuations 
that are effective enough to lead to a qualitative change of the ground state of the system. In the present 
article, we consider order-disorder QPTs within the symmetry breaking mechanism.  Accordingly, the ordered 
phase of the spin system is characterized by a finite magnetization, that is a local order parameter breaking 
the symmetry of the Hamiltonian. In the case of Eq. (\ref{general-spin}), the symmetry is a global $Z_2$ symmetry 
generated by  rotations of $\pi$ around the $z$ direction of all spins:
$
\sigma^{x}_j\rightarrow-\sigma^{x}_j = \Pi^{\dagger}\sigma^{x}_j\Pi $, 
$\sigma^{y}_j\rightarrow-\sigma^{y}_j = \Pi^{\dagger}\sigma^{y}_j\Pi \, , 
$ 
where  $\Pi = \bigotimes_{i=1}^{N} \sigma^z_i$. This indicates  that the ground states corresponding to opposite 
$ \langle gs|\sigma^{x,y}_j|gs\rangle$  are indeed degenerate and span a ground-state-manifold where any state is a 
possible ground state of the system. Because of such a symmetry  (together with the fact that $\Pi|gs\rangle=e^{i\phi}|gs\rangle$),  
the magnetization in the $xy$ vanishes identically in the ground state:
\begin{eqnarray}
\langle gs|\sigma^{x,y}_j|gs\rangle = \langle gs|\Pi^{\dagger}\sigma^{x,y}_j\Pi|gs\rangle
=-\langle gs|\sigma^{x,y}_j|gs\rangle\; .
\end{eqnarray}
The so called 'thermal ground state' is the ground state that is an equal mixture of the two degenerate ground states; 
it can be thought as the limit  $T\rightarrow0$ of the thermodynamical state. Thermal states enjoy the same symmetry of the 
Hamiltonian. Superpositions of degenerate ground states may also preserve the symmetry. However, they are demonstrated to be 
not stable to small perturbations because of  the lack of clustering property~\cite{Parisi}. Therefore the symmetry is 
spontaneously broken in physical systems (in the thermodynamical limit) because any small in-plane local magnetic field lifts 
the degeneracy and there is no observable 
connecting the two ground states (super-selection rule~\cite{super-selection}). For finite systems, for example, we could have a 
superposition of all the spins in opposed directions like $|\uparrow\uparrow...\uparrow\rangle\pm|\downarrow\downarrow...\downarrow\rangle$. 
These two states are orthogonal and preserve the symmetry, but they are not stable against small perturbations either 
(see also Ref.~\refcite{Rossignoli}). Here we will consider both the thermal ground states and the state with explicit symmetry 
breaking of the $Z_2$ symmetry. We observe that thermal ground state and the state with broken symmetry provide the same thermodynamics 
of the system (see~\cite{Sandvik} for a recent reference); the corresponding density matrices, however, differ sensibly. 

\subsection{Factorization of the ground state} 
The ground state of Eq.(\ref{general-spin}) is, in general, a highly entangled state. Nonetheless, there may exist points where 
it is indeed a product state, which is factorized as the tensor product of individual spin states~\cite{Kurman:82}. 
For the model Hamiltonian in Eq.~(\ref{general-spin}) the factorization occurs at~\footnote{It is remarkable that the 
factorization occurs for $d$-dimensional spin system on a bipartite lattice and for finite range exchange interaction 
even in presence of frustration~\cite{Verrucchi:05,Illuminati:08} as the field $h$ is varied across the factorizing 
field $h_{\rm f}$, yielding
\begin{equation}
h_f=\frac{1}{2}\sqrt{({\cal J}_x-{\cal J}_z)({\cal J}_y-{\cal J}_z) }\; ,
\end{equation}
where ${\cal J}_{x,y}=\sum_{|i-j|}^\infty (-)^{|i-j|}Z_{|i-j|} J_{x,y}^{|i-j|}$, and ${\cal J}_{x,y}=\sum_{|i-j|}^\infty Z_{|i-j|} J_{x,y}^{|i-j|}$, $Z_r$ indicating the range of interaction among the spins.}
\begin{equation}
h_f = 2 \sqrt{(J_x-J_z) (J_y-J_z)} \, .
\end{equation}
Indeed,  a qualitative change of the entanglement, called Entanglement Transition (ET),  occurs at $h_f$, involving two gapped phases and with a divergent entanglement range~\cite{fubini,Amico:06}. We also mention that analysis of the ground state for finite size systems demonstrated that the factorization  can be viewed as transition between ground states of different parities (see Ref.~\cite{giorgidepasquale} and and a related article in this special issue \cite{Canosa:12}). Close to the factorization, the system is characterized by the following pattern of correlations functions:
\begin{equation}
g^{\alpha,\beta}(r)= A^{\alpha,\beta}+B^{\alpha,\beta}(r) (h-h_f)+{\cal O}[(h-h_f)^2] \; ,
\label{corr_pattern}
\end{equation} 
where we remark that $A^{\alpha,\beta}$ is the same constant $\forall r$. Eq. (\ref{corr_pattern}) was obtained for the quantum XY 
model (case II below) in Ref.~{\refcite{baroni}}, but it holds for the generic model in Eq.(\ref{general-spin})~\cite{Quintanilla:unpubl}. 

We will discuss Eq.~\eqref{general-spin} in the following cases. 

\paragraph{(I)} The non-integrable antiferromagnetic $XYX$ model ($J_x=J_z=1$), with $J_y = 1/4$
(this is the case experimentally realized with ${\rm Cs_2 \, Co \, Cl_4}$~\cite{kenzelmann}). 
At  zero field $h$ the system is critical in the same universality class of the isotropic XY model (see 
Fig.~\ref{f3-1} for $\vert J_x \vert = 1$). At finite $h$ the system acquires a finite gap, vanishing at $h\simeq 3.21$ were a QPT occurs of the Ising type.
The order parameter is $\langle S^x \rangle $. The factorization is displayed at $h\simeq 3.16$.  
\begin{figure}
\begin{center}
\hspace*{-5cm}
\includegraphics[scale=0.5]{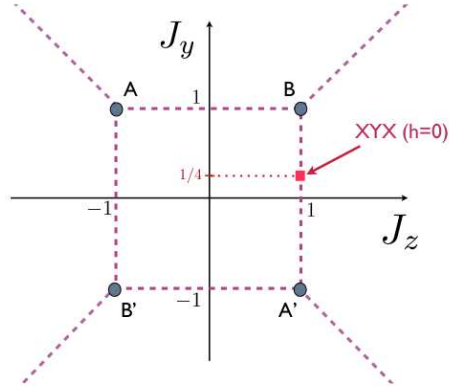}
\end{center}
\vspace*{-5cm}
\caption{Phase diagram of the integrable $XYZ$ model (Eq.(\ref{general-spin}) in zero field $h=0$). 
The dashed (red) lines correspond to critical phases. Everywhere else the system is gapped. In the points 
$A,A',B,B'$ the system is  tricritical.}
\label{f3-1}
\end{figure}

\paragraph{(II)}
For $J_z=0$ $J_x=J(1+\gamma), J_y=J (1-\gamma), J_z=0$, the (quantum anisotropic $XY$) Hamiltonian can be diagonalized by  first applying  the Jordan-Wigner transformation and then performing a Bogoliubov transformation~\cite{Lieb,Pfeuty,XYsol}. 
The  quantum Ising  model corresponds to  $\gamma =1$ while the (isotropic) $XX$-model is recovered for $\gamma = 0$. In the  latter (isotropic) case the model enjoys an additional symmetry resulting in the conservation of the total magnetization along the $z$-axis. The properties 
of the Hamiltonian are governed by  the dimensionless coupling constant $h/J$ (where $J $ is set as the energy scale and working in units such that $\hbar = 1$).
The phase diagram is sketched in Fig.~\ref{f3-2}~\cite{Takahashi-book}. In the interval  $0<\gamma\le 1$ the system undergoes a second order quantum phase transition at the critical value $h_c=1$. The order parameter is the magnetization in $x$-direction,  $\langle S^x\rangle $, different from zero for $h <1$. In the phase with broken 
symmetry the ground state has a two-fold degeneracy reflecting a global phase flip symmetry of the system. The magnetization along the $z$-direction, $\langle S^z\rangle $, is different from zero for any value of $h$, but has a singular behavior in its first derivative at the transition point.   In the whole interval $0<\gamma\le 1$ the transition belongs to the Ising universality class. For $\gamma=0$ the QPT is of the  Berezinskii-Kosterlitz-Thouless type. The ET transition, occurring at the
the factorization field is also associated with a change in the way the correlations decay with distance: for
$h<h_f$, the decay is oscillatory while, for $h>h_f$, the decay is monotonic \cite{XYsol}. 
The quantum XY model can be experimentally realized with the magnetic compound ${\rm Co \, Nb_2 \, O_6}$~\cite{coldea}. 
\begin{figure}
\begin{center}
\includegraphics[scale=0.43]{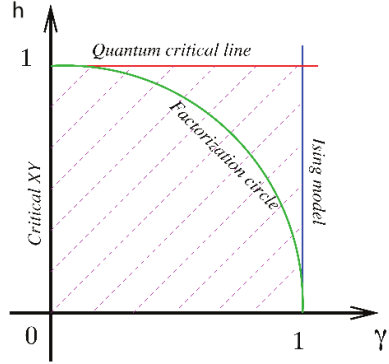} 
\end{center}
\caption{The zero temperature phase diagram of the one dimensional anisotropic $XY$ model in transverse field. Along the quantum critical line the model identifies the Ising universality class with indices $z=\nu=1$; in the hatched area (with $\gamma>0$) the system display long range order in the $x-y$ spin components. The critical $XY$ regime coicide with that one of the XXZ model for $\Delta=0$.
The factorization of the ground state occurs along the circle $h=\sqrt{1-\gamma^2}$.}
\label{f3-2}
\end{figure}

\paragraph{(III)}
The antiferromagnetic anisotropic $XXZ$ chain can be obtained from Eq.~(\ref{general-spin}) 
by setting , $J_x = J_y \equiv J$, and $J_z = \Delta$. 
At zero field it presents a critical $xy$ phase with quasi-long range order (quasi-lro)
for $\vert J_z \vert < 1$; this is separated by two classical phases with QPTs 
at $J_z = \pm 1$. For $h\neq 0$ the $xy$ phase is a strip in the phase diagram, 
eventually turning into polarized phases for sufficiently strong magnetic field. Here we remark that the factorization phenomenon degenerates in the saturation occurring as a first order transition.
Despite the similarities the factorization and  saturation are quite distinct phenomena.
In the case in which $h/J=0$ and for any value of $\Delta$ the model is referred to as the $XXZ$ model. The two isotropic points $\Delta=1$ 
and $\Delta=-1$ describe the antiferromagnetic and ferromagnetic chains respectively (see the phase diagram in Fig.~\ref{f3-3}~\cite{Takahashi-book}). The 
one-dimensional $XXZ$ model can be solved exactly by Bethe Ansatz technique (see e.g.~\cite{Takahashi-book}) and the  correlation 
functions can be expressed in terms of certain determinants (see~\cite{Korepin-book} for a review).  Correlation functions, especially for intermediate 
distances, are in general  difficult to evaluate, although important steps in this direction have been made~\cite{Mallet,Korepin-inverse}. 

\begin{figure}
\begin{center}
\includegraphics[scale=0.35]{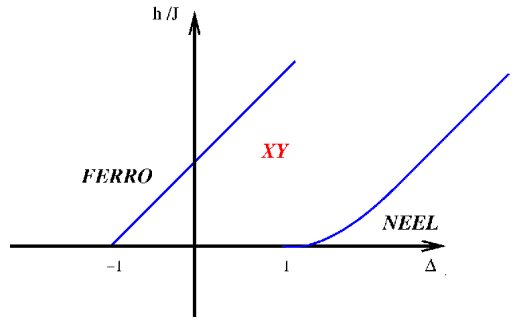} \end{center}
\caption{Zero temperature phase diagram of the spin $1/2$ XXZ model in one dimension. The $XY$ phase is characterized by power law decay of the $xy$ correlations. The Neel and the XY phases are separated by a line of second order phase transitions; at $\Delta=1$ the transition is of Berezinskii-Kosterlitz-Thouless. The ferromagnetic and $XY$ phases are separated by first order phase transitions due to simple level crossings; the onset to the ferromagnetic phase occurs through  the saturation phenomenon.}
\label{f3-3}
\end{figure}

The zero temperature phase diagram of the $XXZ$ model at zero magnetic field  shows a gapless phase in the interval $-1\le \Delta < 1$ with power 
law decaying correlation functions~\cite{Mikeska_xx,Tonegawa81}.  Outside this interval the excitations are gapped. The two phases are separated by a 
Berezinskii-Kosterlitz-Thouless  phase transition at $\Delta=1$ while at $\Delta=-1$  the transition is of the first order.  In the presence of the  external magnetic field a finite  energy gap appears in the spectrum.

\subsection{The reduced density matrix}
In order to compute the QD $ Q_r $ between any two spins $ A $ and $ B $
at distance $ r $ along the chain, the key ingredients are 
the single-site density matrices $ \hat{\rho}^A,\hat{\rho}^B $ and
the two-site density matrix of the composite subsystem $ \hat{\rho}^{AB} $ [see Eq.\eqref{qcorrel}].
Due to translational invariance along the chain, single-site density matrices 
are the same for any spin and they are given by
\begin{equation}
\label{1site_rho}
\hat{\rho}^A = \hat{\rho}^B = \frac{1}{2} \left(\begin{array}{cc} 1 + g_z & g_x \\ g_x & 1 - g_z \end{array} \right) \, ,
\end{equation}
where $g_\alpha \equiv \langle \hat{\sigma}_\alpha \rangle$ are the local expectation values of the magnetization
along the three different axes.

The two-site reduced density matrix for a Hamiltonian model Eq.(\ref{general-spin}) reads

\begin{equation}
  \label{rhoAB}
  \hat{\rho}_r= \frac{1}{4}
  \left(\begin{array}{cccc}\text{A} & \text{a} & \text{a} & \text{F} \\ \text{a} & \text{B} & \text{C} & \text{b} \\ \text{a} & \text{C} & \text{B} & \text{b} \\ \text{F} & \text{b} & \text{b} & \text{D}\end{array}\right)
\end{equation}
in the basis $ \{ \ket{00} , \ket{01} , \ket{10} , \ket{11} \} $, 
where $ \ket{0} $ and $ \ket{1} $ are eigenstates of $ \hat{\sigma}^z $
(because of translational invariance, this density matrix depends only on the distance $ r $ 
between the two spins: $ \hat{\rho}_r \equiv \hat{\rho}^{AB}$). 
The various entries in Eq.~\eqref{rhoAB} are related to the two-point correlators 
$g_{\alpha \beta}(r) \equiv \langle \hat{\sigma}^\alpha_j \hat{\sigma}^ \beta_{j+r} \rangle$ and 
to the local magnetizations, according to the following:
\begin{equation}
  \label{rho_elements_parity}
  \begin{split}
    \text{A}& = 1+g_{z}+g_{zz} , \\
    \text{D}& = 1-g_{z}+g_{zz} , \\
    \text{B}& = 1-g_{zz}        \\
    \text{C}& = g_{xx}+g_{yy} ,  \\
    \text{F}& = g_{xx}-g_{yy} .
  \end{split}
\end{equation}
The entries above express the parity coefficients, while
\begin{equation}
  \label{rho_elements_symbr}
  \begin{split}
    \text{a}& = g_{x}+g_{xz},\\
    \text{b}& = g_{x}-g_{xz} \, .
  \end{split}
\end{equation}
As long as the state displays the $ Z_2$-symmetry of the Hamiltonian (i.e., the thermal ground state) the 
matrix elements $ \text{a} $ and $ \text{b}$ vanish. In this case, the reduced density matrix correspond to an 
$X$ state. On the other hand, the requirement $ \text{a,b} \neq 0$ characterizes the state with symmetry 
breaking~\cite{Osterloh,Oliveira}.

\section{Quantum discord in thermal ground states}

\label{section_thermal}

In this section we summarize the properties of QD in the thermal ground states of spin models with $Z_2$-symmetry. In this case the density matrices are given by requiring $a=b=0$ into Eq.~(\ref{rhoAB}). 

\subsection{XY model}
We begin with the  transverse field XY chain (case (I) of Sec.~\ref{section_models}).
Here, we will be interested in the correlations between arbitrary distant spin pairs (not only nearest-nighbor pairs). 
Then, the reduced state $\rho_{0r}$ for spin pairs at a distance $r=|i-j|$ reads
\begin{equation}
\rho_{0r}=\frac{1}{4}\{\mathbf{I}^{0n}+\langle\sigma^{z}\rangle(\sigma_{0}^{z}+\sigma_{r}^{z}) 
+\sum_{i={x,y,z}}\langle\sigma_{0}^{i}\sigma_{r}^{i}\rangle\sigma_{0}^{i}\sigma_{r}^{i}\},
\label{RDO}
\end{equation}
where $\mathbf{I}^{0n}$ is the identity operator acting on the joint state space of the spins $0$ and $n$. The magnetization density $\langle\sigma^{z}\rangle$ as well as the two-point correlation functions $\langle\sigma_{0}^{i}\sigma_{r}^{i}\rangle$ can be directly obtained from the exact solution of the model~\cite{XYsol}. Then, the total information shared by the spins in the state (\ref{RDO}) is given by 
\begin{equation}
I(\rho_{0r})=S(\rho_{0})+S(\rho_{r})-S(\rho_{r}), 
\label{IXY}
\end{equation}
with 
\begin{equation}
S(\rho_{0})=S(\rho_{r})= -\sum_{i=0}^{1}\frac{1+(-1)^{i}\langle\sigma^{z}\rangle}{2}\log_{2}
\frac{1+(-1)^{i}\langle\sigma^{z}\rangle}{2}
\end{equation}
and
\begin{equation}
S(\rho_{0r})=\sum_{i=0}^{1}(\xi_{i}\log_{2}\xi_{i}+\eta_{i}\log_{2}\eta_{i}), 
\end{equation}
where
\begin{eqnarray}
\xi_{i}&=&\frac{1+\langle\sigma_{0}^{z}\sigma_{r}^{z}\rangle}{4}+\frac{(-1)^{i}}{4}\sqrt{(\langle\sigma_{0}^{x}\sigma_{r}^{x}\rangle-\langle\sigma_{0}^{y}\sigma_{r}^{y}\rangle)^{2} +4\langle\sigma^{z}\rangle^{2}}, \nonumber \\
\eta_{i}&=&\frac{1}{4}\left[1-\langle\sigma_{0}^{z}\sigma_{r}^{z}\rangle+(-1)^{i}(\langle\sigma_{0}^{x}\sigma_{r}^{x}\rangle+\langle\sigma_{0}^{y}\sigma_{r}^{y}\rangle)\right].
\end{eqnarray}
Following \cite{Sarandy:09,Maziero:12,Maziero:10}, we obtain that the minimum in Eq. (\ref{qcorrel}) is attained, for all values of $h$, $\gamma$, and $n$, by the following set of projectors: $\{|+\rangle\langle+|,|-\rangle\langle-|\}$, with $|\pm\rangle=(\left\vert\uparrow\right\rangle \pm\left\vert \downarrow\right\rangle )/\sqrt{2}$, where $\{\left\vert \uparrow\right\rangle ,\left\vert \downarrow\right\rangle \}$
are the eigenstates of $\sigma^{z}$. Thus one obtains
\begin{equation}
J(\rho_{r})=H_{bin}(p_{1})+H_{bin}(p_{2}),
\end{equation}
where $H_{bin}(x)$ is the binary entropy 
\begin{equation}
H_{bin}(x)=-x\log_{2}x-(1-x)\log_{2}(1-x)
\end{equation}
and
\begin{equation}
p_{1}=\frac{1}{2}\left(1+\langle\sigma^{z}\rangle\right), \hspace{0.3cm} 
p_{2}=\frac{1}{2}\left(1+\sqrt{\langle\sigma_{0}^{x}\sigma_{r}^{x}\rangle^{2}+\langle\sigma^{z}\rangle^{2}}\right).
\label{finalXY}
\end{equation} 
This provides therefore an analytical expression for evaluating QD. We 
can then use this expression to investigate QD at quantum criticality. 
By fixing $\gamma=1$ (Ising model) and considering pairwise QD between first 
neighbors, we can identify the ferromagnetic-paramagnetic QPT by looking at 
either classical or quantum correlations, as shown in Figs.~\ref{f4-2} and~\ref{f4-3}. 

\begin{figure}[ht]
\centerline{\psfig{file=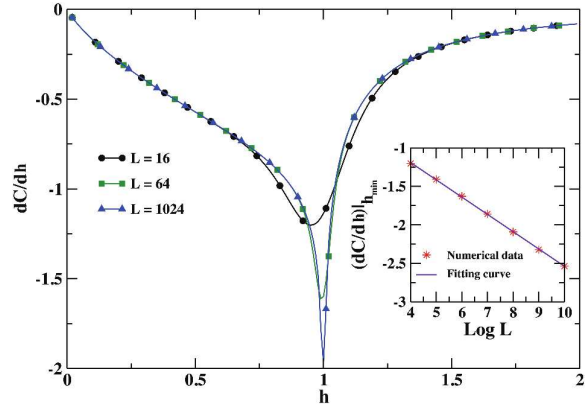,width=3in}}
\vspace*{8pt}
\caption{(Color online) First derivative of the classical correlation for nearest-neighbor 
spins with respect to $h$  in the transverse field Ising chain for different lattice 
sizes $L$. The derivative of $C$ has a pronounced minimum at $h_{min}$, which tends to the critical 
point $h=1$ as $L\rightarrow \infty$. Inset: $dC/dh$ taken at $h_{min}$ exhibits a logarithmic divergence 
fitted by $\left.(dC/dh)\right|_{h_{min}} = -0.29161 - 0.22471 \log L$. [Original plot from Ref.~11]}
\label{f4-2}
\end{figure}

\begin{figure}[ht]
\centerline{\psfig{file=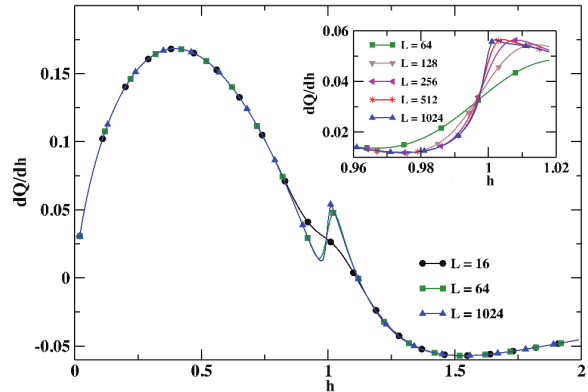,width=3in}}
\vspace*{8pt}
\caption{(Color online) First derivative of the quantum correlation for nearest-neighbor 
spins with respect to $h$ in the transverse field Ising chain for different lattice sizes $L$. 
Inset: $dQ/dh$ presents an inflection point that tends to the QCP $h=1$ as $L\rightarrow \infty$. 
[Original plot from Ref.~11]}
\label{f4-3}
\end{figure}

As easily observed, derivatives of both 
$C(\rho)$ and $Q(\rho)$ exhibit a non-analytical behavior at the quantum critical 
point as we approach to the limit of an infinite system. 

\begin{figure}[ht]
\centerline{{{\psfig{file=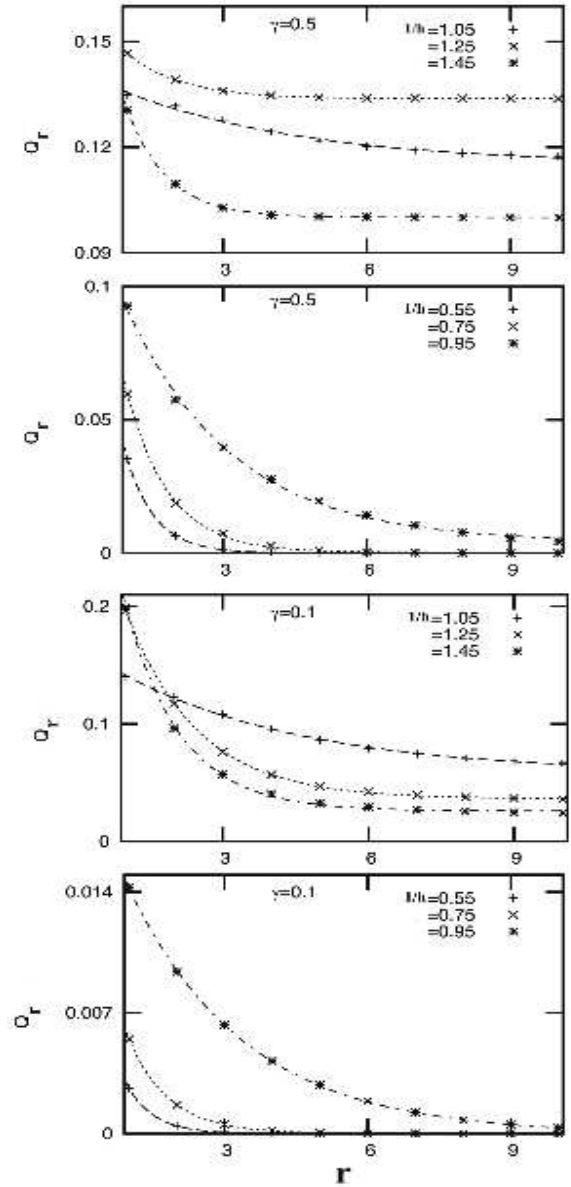,height=16cm,width=7.8cm}}}}
\vspace*{8pt}
\caption{Decay of quantum discord with the distance between the spins sites for some values of $h$ and $\gamma$. 
The points are the computed values of QD and the lines are the exponential fits (see the text for details). 
[Original plot from Ref.~24]}
\label{f4-4}
\end{figure}

Remarkably, besides the non-analytical behavior of pairwise QD derivatives, we can show that QD exhibits a 
long-range decay~\cite{Maziero:12}. This is in contrast with the two-spin entanglement behavior, which is 
typically very short-ranged. In order to illustrate the long-range behavior of QD, we plot in Fig.~\ref{f4-4} 
pairwise QD as a function of distance $n$. The curves in Fig.~\ref{f4-4} are the exponential fits of QD. We 
observe that, for both examples of anisotropies considered, $\gamma=0.1$ and $\gamma=0.5$, the decay of QD 
with distance can be well fitted by an exponential function $a+b\exp(-c\, r)$, where $a$, $b$, and $c$ are 
constants. Nevertheless, we notice that, while for $h>1$ QD vanish exponentially, in the cases where $h<1$, 
we obtain a constant long-distance value for QD that depends only on $\gamma$ and $h$.  
It is also remarkable to observe that the factorization phenomenon can be traced already for the thermal 
ground state: it is the unique value of the field where the same quantum correlations are present at any 
length scale (left inset of Fig.~\ref{qd}).  This is due by the peculiar pattern of correlation functions 
close to the factorizing field in Eq.(\ref{corr_pattern}).

\subsection{XXZ model}
As a second example, let us consider now $XXZ$ spin-1/2 chain with  $h=0$ (the case (III) of Sec.\ref{section_models}).
In this case the  Hamiltonian exhibits $U(1)$ invariance, which ensures that the element $F$ 
of the reduced density matrix given by Eq.~(\ref{rhoAB}) vanishes. Moreover, the ground state has magnetization density $g^k_{z} = \langle \sigma_z^k \rangle=0$ ($\forall\, k$), which implies that $A = D = (1/4) \left(1+g_{zz}\right)$. For convenience, we consider correlations between nearest-neighbor spin pairs, i.e. $\rho_r = \rho_1 \equiv \rho$, and then expand $\rho$ in terms of Pauli operators, which reads
\begin{equation}
\rho = \frac{1}{4} \left[ I\otimes I + \sum_{i=1}^{3} \left( c_i \sigma^i \otimes \sigma^i \right) \right], 
\end{equation}
with 
\begin{eqnarray}
c_1 = c_2 = \frac{1}{2} \left( g_{xx} + g_{yy} \right), \hspace{0.5cm} c_3 = g_{zz} .
\label{gen-c}
\end{eqnarray} 
By taking into account those notations, classical correlations in Eq.~(\ref{ccorrel}) can be analytically worked out~\cite{Luo:08}, yielding  
\begin{equation}
C(\rho) = \frac{\left(1-c\right)}{2} \log \left(1-c\right) + \frac{\left(1+c\right)}{2} \log \left(1+c\right),
\label{cc-xxz}
\end{equation}
with $c = \max\left(|c_1|,|c_2|,|c_3|\right)$. For the mutual information $I(\rho)$ we obtain
\begin{equation}
I(\rho) = 2 + \sum_{i=0}^{3} \lambda_i \log \lambda_i, 
\end{equation}
where
\begin{eqnarray}
\lambda_0 &=& \frac{1}{4} \left( 1-c_1-c_2-c_3 \right), \nonumber \\
\lambda_1 &=& \lambda_2 = \frac{1}{4} \left( 1 + c_3 \right), \nonumber \\
\lambda_3 &=& \frac{1}{4} \left( 1+ c_1 + c_2 - c_3 \right).
\label{qmi-ev-xxz}
\end{eqnarray}
In order to compute $C(\rho)$ and $Q(\rho)$ we write $c_1$, $c_2$, and $c_3$ in terms of the ground state 
energy density. By using the Hellmann-Feynman theorem~\cite{Hellmann:37,Feynman:39} for the XXZ  Hamiltonian $H_{XXZ}$, we obtain
\begin{eqnarray}
c_1 &=& c_2 = \frac{1}{2} \left(G_{xx} + G_{yy}\right) = \Delta \frac{\partial \varepsilon_{xxz}}{\partial \Delta} 
- \varepsilon_{xxz} \, , \nonumber \\
c_3 &=& G_{zz} = -2 \frac{\partial \varepsilon_{xxz}}{\partial \Delta} \, ,
\label{c-xxz}
\end{eqnarray}
where $\varepsilon_{xxz}$ is the ground state energy density 
\begin{equation}
\varepsilon_{xxz} = \frac{\langle \psi_0| H_{XXZ} |\psi_0 \rangle}{L} = - \frac{1}{2} \left(G_{xx} + 
G_{yy} + \Delta G_{zz} \right),
\label{aux-xxz}
\end{equation}
with $|\psi_0\rangle$ denoting the ground state of $H_{XXZ}$. Eqs.~(\ref{c-xxz}) and~(\ref{aux-xxz}) hold for a chain with an arbitrary number of sites, allowing the discussion of correlations either for finite or infinite chains. Indeed, ground state energy as well as its derivatives can 
be exactly determined by Bethe Ansatz technique~\cite{Yang:66}, which allows us to obtain the 
correlation functions $c_1$, $c_2$, and $c_3$. In Fig.~\ref{f4-1}, we plot classical and 
quantum correlations between nearest-neighbor pairs for an infinite XXZ spin chain. 

\begin{figure}[th]
\centering {\includegraphics[angle=0,scale=0.33]{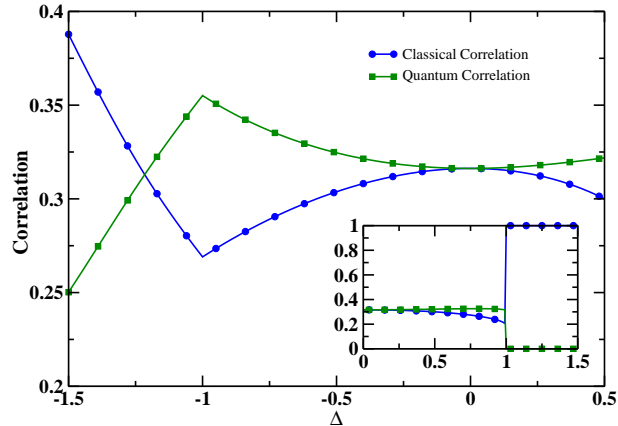}}
\caption{(Color online) Quantum and classical correlations for nearest-neighbor spins in the XXZ chain for 
$L\rightarrow \infty$.  [Original plot from Ref.~11]}
\label{f4-1}
\end{figure}

Note that, in the classical Ising limit $\Delta \rightarrow \infty$, we have a  fully polarized ferromagnet. The ground state is then a doublet given by the vectors 
$|\uparrow \uparrow \cdots \uparrow\rangle$ and $|\downarrow \downarrow \cdots \downarrow\rangle$, 
yielding the mixed state
\begin{equation}
\rho = \frac{1}{2} |\uparrow \uparrow \cdots \uparrow\rangle \langle \uparrow \uparrow \cdots \uparrow| 
+ \frac{1}{2} |\downarrow \downarrow \cdots \downarrow\rangle \langle \downarrow\downarrow \cdots \downarrow|. 
\label{xxz-bpd}
\end{equation}
Indeed, this is simply a classical probability mixing, with $C(\rho) = I(\rho) = 1$ and $Q(\rho) = 0$. 
The same applies for the antiferromagnetic Ising limit $\Delta \rightarrow - \infty$, where a doubly degenerate 
ground state arises. Moreover, observe that the classical (quantum) correlation is a minimum (maximum) at the 
infinite-order QCP $\Delta=-1$. On the other hand, both correlations are discontinuous at the first-order QCP 
$\Delta=1$. The behavior of QD at the first-order QPT is in agreement with the expected behavior for   
entanglement, as shown by Wu {\it et al.} in Ref. \refcite{Wu:04}. In fact, the approach used by Wu {\it et al.} 
for entanglement also applies for any function of the reduced density matrix of two-spins. It is only based 
on the fact that the non-analyticities in the derivatives of the ground state energy typically 
come from the elements of the reduced density matrix and, as such, may be inherited by entanglement or  
other functions of the density matrix elements. There is a caveat in the fact that entanglement and QD 
definitions involve a maximization or minimization processes, which may create accidental non-analyticities or
even hide ones. In fact, the behavior of the entanglement as measured by concurrence~\cite{Wootters:98} or 
negativity~\cite{Vidal:02a} at $\Delta=1$, is not the one expected for first-order QPTs, but rather for second order 
QPTs, since it is not discontinuous at $\Delta=1$ but its derivative is. Such a misidentification of the order of 
the QPT is due to the maximization process involved~\cite{Yang:05}. Remarkably, although involving an optimization, 
QD {\it does} indicate the correct order of the QPT, exhibiting in this case a behavior that is superior to entanglement. 
For $\Delta=-1$, the result for QD is in agreement with the entanglement behavior (maximum or minimum at the infinity-order QPT), 
even though there is no proof of the generality of this behavior for arbitrary infinite-order QPTs.

\section{Quantum discord in ground states with symmetry breaking}
\label{section_broken}

In this section, we discuss the main features of QD for $\rho_r$ given by Eq.(\ref{rhoAB}) with non vanishing $a,b$ 
given by (\ref{rho_elements_symbr}). This correspond to the ground state of the Hamiltonian (\ref{general-spin}) 
(with nearest neighbor couplings), with broken symmetry.  In this case the reduced density matrix is accessed via 
DMRG  for finite systems with open boundaries~\cite{dmrg}, by adding a small symmetry-breaking  longitudinal field 
(of typical magnitude $h_x = 10^{-6}$) to the Hamiltonian~\footnote{ For the quantum Ising model  the reduced density 
operator with $a,b \neq0$ could be accessed analytically, in principle. Nevertheless, we resort to DMRG because the 
$g_{xz}(r)$ is given in terms of integral contour formulas~\cite{johnson}.}. In  Fig.~\ref{qd}, it is displayed QD 
for the  quantum XY model.  Similarly, QD for XYX model in the ground state with symmetry breaking is obtained in Fig.(\ref{qdxyx}). 

\begin{figure}[ht]
  \centerline{\psfig{file=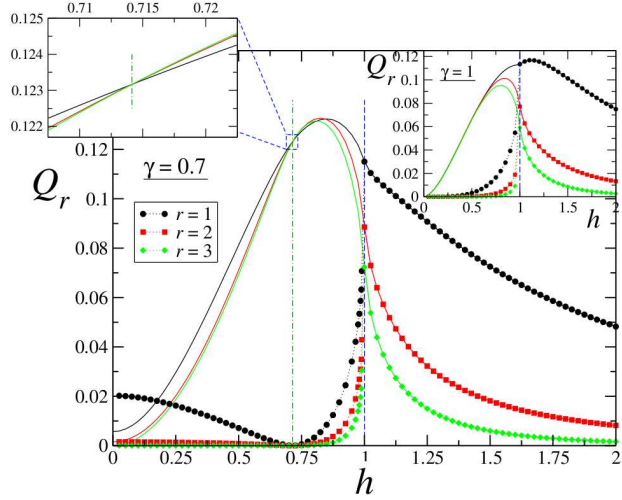,width=3.2in}}
  \caption{Quantum discord $Q_r(h)$ between two spins at distance $r$ in the $XY$ model 
    at $\gamma = 0.7$ (main plot and left inset) and $\gamma = 1$ (right inset), 
    as a function of the field $h$.
    Continuous lines are for the thermal ground state, while symbols denote the
    symmetry-broken state obtained  with DMRG
    in a chain of $L=400$ spins; simulations were performed by keeping $m=500$ states
    and evaluating correlators at the center of the open-bounded chain. 
    For $\gamma=0.7$ and at $h_f \simeq 0.714$, in the symmetric state all the curves 
    for different values of $r$ intersect, while after breaking the symmetry
    $Q_r$ is rigorously zero.
    At the critical point $Q_r$ is non analytic, thus signaling the QPT. 
    In the paramagnetic phase, there is no symmetry breaking to affect $Q_r$. [Original plot from Ref.~36]}
  \label{qd}
\end{figure} 


\begin{figure}[ht]
  \centerline{\psfig{file=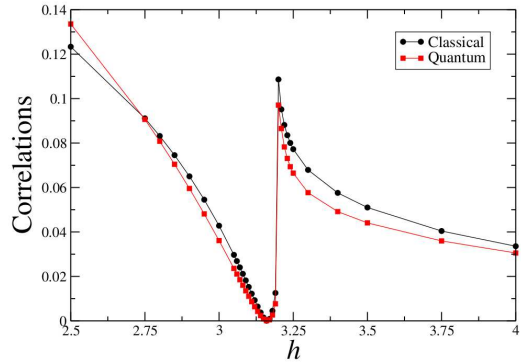,width=3.2in}} 
  \caption{Quantum discord $Q_r(h)$ between two spins at distance $r$ in the $XYX$ model.}
  \label{qdxyx}
\end{figure} 


QD is substantially affected by SSB~\footnote{In contrast, bipartite entanglement is not affected by SSB around 
the quantum critical point. SSB slightly influences bipartite entanglement only for $h<h_f$. However, multipartite 
entanglement is highly affected~\cite {Oliveira}.}. 
We observe that quantum correlations are typically much smaller
deep in the ordered ferromagnetic phase $h < h_c$, rather than in the paramagnetic one $h > h_c$. 
Nonetheless, as we shall see, they play a fundamental role to drive the order-disorder
transition at the QPT, where $Q_r$ exhibits a maximum, as well as the correlation
transition at $h_f$, where $Q_r$ is rigorously zero. 
The QPT is marked by a divergent 
derivative of the QD (see also~\cite{Dillenschneider:08,Sarandy:09,Maziero:10}). 
Such divergence is present at every $\gamma$, for the symmetry broken state;
on the other hand, for the thermal ground state, it is not present at $\gamma=1$. 
A thorough finite-size scaling analysis is shown in Fig.~(\ref{scaling}(b)) proving that $z=\nu=1$.  
For the thermal ground state (in the thermodynamic limit), we found that $\partial_h Q_r$ 
diverge logarithmically as $\partial_h Q_r \sim  \ln |h-h_c| $, within the Ising universality class.

\begin{figure}[!t]
  \centering
(a) \subfigure[] {\includegraphics[width=0.48\columnwidth]{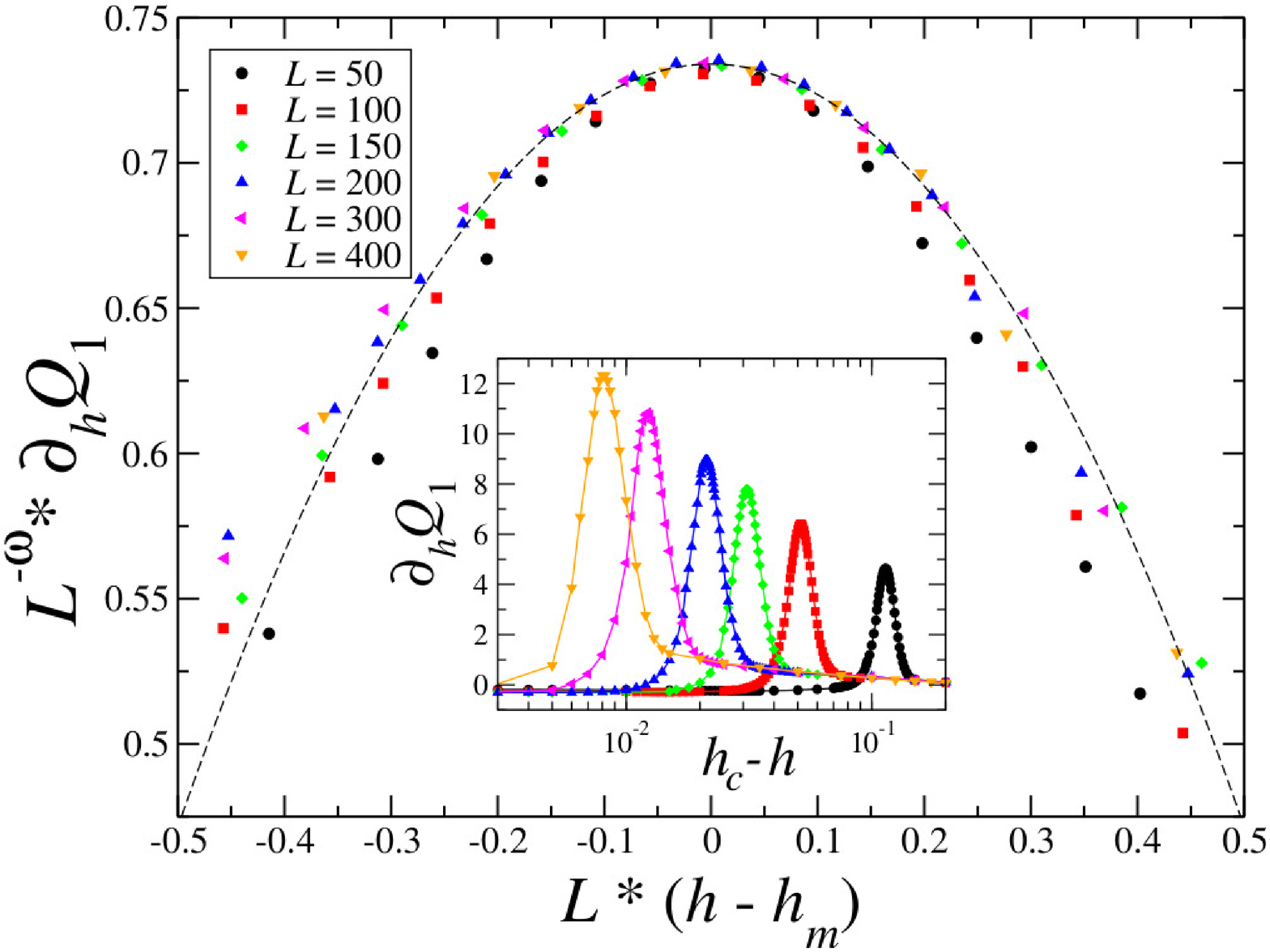}}
 (b)\subfigure[] { \includegraphics[width=0.48\columnwidth]{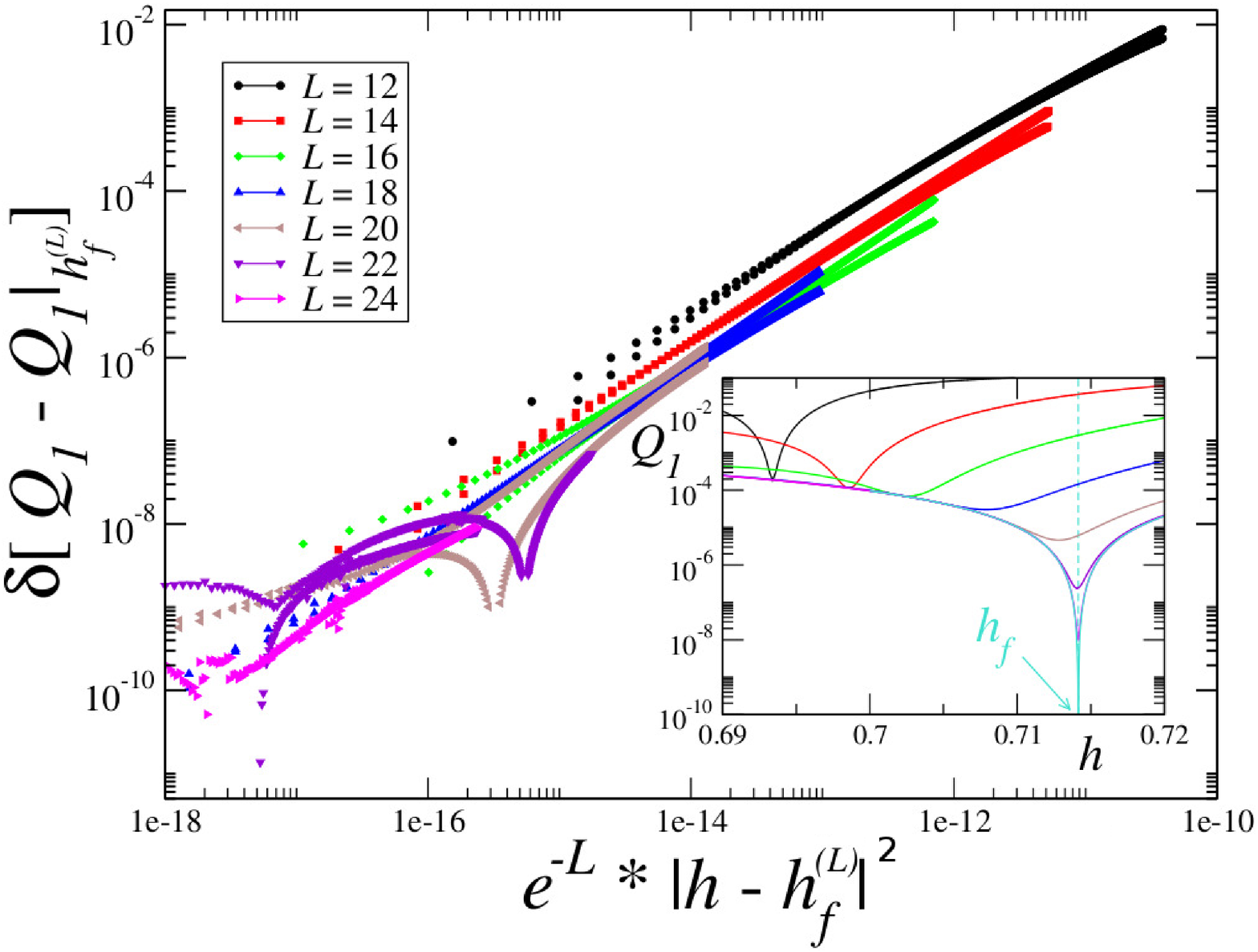}}
  \caption{(a) Finite-size scaling of $\partial_h Q_1$ for the
    symmetry-broken state in proximity of the critical point $h_c$. 
    Displayed data are for $\gamma= 0.7$. 
    The first derivative of the QD is a function of $L^{-\nu}(h-h_m)$ only, 
    and satisfies the scaling ansatz $\partial_h Q_1 \sim L^\omega \times F[L^{-\nu}(h-h_m)]$,
    where 
    $h_m$ is the renormalized critical point at finite size $L$ and $\omega=0.472$. 
    We found a universal behavior $h_c - h_m \sim L^{-1.28 \pm 0.03}$ 
    with respect to $\gamma$. 
    Inset: raw data of $\partial_h Q_1$ as a function of the transverse field.
    (b) Scaling of $Q_1$ close to the factorizing field, for $\gamma = 0.7$:
    we found an exponential convergence to the thermodynamic limit, 
    with a universal behavior according to $e^{-\alpha L}(h-h_f^{(L)})$,
    $\alpha \approx 1$ [$h_f^{(L)}$ denotes the effective factorizing field 
    at size $L$, while $\delta(Q_1) \equiv Q_1^{(L)} - Q_1^{(L \to \infty)}$].
    Due to the extremely fast convergence to the asymptotic value,
    already at $L \sim 20$ differences with the thermodynamic limit
    are comparable with DMRG accuracy.
    Inset: raw data of $Q_1$ as a function of $h$. The cyan line is for $L = 30$
    so that, up to numerical precision, the system behaves at the thermodynamic limit. [Original plot from Ref.~36]
    }
  \label{scaling}
\end{figure}

%

At the factorizing field $h_f$, all the correlation measures are zero in the state 
with broken symmetry (see symbols in Fig.~\ref{qd}); in particular, we numerically found 
a dependence 
\begin{equation}
Q_r \sim (h-h_f)^2 \times \big( \frac{1-\gamma}{1+\gamma} \big)^r
\end{equation}
 close to it. 
Such behavior is consistent with the expression of correlation functions 
close to the factorizing line Eq.(\ref{corr_pattern}),  and here appears 
incorporating the effect arising from the non vanishing spontaneous magnetization. 

It is found that  $Q_r$ exponentially tends to the asymptotic value $Q_r^{(L \to \infty)}$ (see Fig.~\ref{scaling}(b)). 
In~\cite{Amico:06,fubini}, it was shown that $h_f$ marks the transition between 
two different patterns of entanglement. The factorization is thus a new kind of zero-temperature 
transition of collective nature, not accompanied by a change of symmetry. 
We emphasize, though, that this transition does not correspond to any non analyticity in the ground state as a function of $h$.
\footnote{Accordingly, the fidelity  $\mathcal F(h)\equiv |\langle gs(h)|gs(h+\delta h\rangle|$  (which can detect both symmetry breaking and non-symmetry breaking QPTs), 
is a smooth function at $h_f$.}. 
At the level of spectral properties of the system, we interpret this result 
as an effect of certain {\it competition between} states belonging to different parity sectors 
for finite $L$~\cite{giorgidepasquale}; as these states intersect, the 
ground-state energy density is diverging {\it for all finite $L$} (such divergence, though, 
vanishes in the thermodynamic limit). 


\section{Witnessing quantum correlated states}
\label{section_witness}

Computation of QD involves an extremization procedure. However, if we just want to find out 
whether a state exhibits quantum correlations, such a procedure can be avoided by introducing 
witness operators, namely, observables able to detect the presence of nonclassical states. 
A witness $\cal{W}$ is  defined here as a Hermitian operator whose norm $\| {\cal W} \|$ is such 
that, for all nonclassical states, $\| {\cal W} \| > 0$, with a convenient norm measure adopted. 
Therefore, $\| {\cal W} \| > 0$ is a sufficient condition for nonclassicality (or, equivalently, 
a necessary condition for classicality).  
Let us begin by defining a classical state. We will concetrate in von Neumann meansurements, but our 
approach can be generalized for POVMs. 

\begin{definition}
If there exists any measurement $\{\Pi_j\}$ such that $\Phi(\rho)=\rho$ then $\rho$ describes a {\it classical} state 
under von Neumann local measurements.
\end{definition}

Therefore, it is always possible to find out a local measurement basis such that a classical state 
$\rho$ is kept undisturbed. In this case, we will denote $\rho \in {\cal C}^N$, where ${\cal C}^N$ 
is the set of $N$-partite classical states. Observe that such a definition of classicality coincides 
with the vanishing of global QD~\cite{Rulli:11}. Note also that $\Phi(\rho)$ by itself 
is a classical state for any $\rho$, namely, $\Phi(\Phi(\rho)) = \Phi(\rho)$. Hence, $\Phi(\rho)$ can 
be interpreted as a decohered version of $\rho$ induced by measurement. A witness for nonclassical states 
can be directly obtained from the observation that the elements of the set $\{\Pi_j\}$ are eigenprojectors of $\rho$. 
This has been shown in Ref.~\refcite{Saguia:11}, being stated by Theorem 2 below (see also Ref.~\refcite{Luo:08}).

\begin{theorem}
\,\,$\rho \in {\cal C}^N \Longleftrightarrow \left[ \rho, \Pi_j \right] = 0 \,\,\,(\forall j)$, with 
$\Pi_j = \Pi_{A_1}^{i_1} \otimes \cdots \Pi_{A_N}^{i_N}$ and $j$ denoting the index string 
$(i_1 \cdots i_N)$.
\label{t1}
\end{theorem}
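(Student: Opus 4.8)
The plan is to establish, for a \emph{fixed} complete set of rank-one product projectors $\{\Pi_j\}$, the equivalence $\Phi(\rho)=\rho \Longleftrightarrow [\rho,\Pi_j]=0\ (\forall j)$, where $\Phi(\rho)=\sum_j \Pi_j\,\rho\,\Pi_j$ is the local measurement (dephasing) map appearing in the Definition. Since membership in ${\cal C}^N$ and the right-hand side of the Theorem both carry an implicit existential quantifier over the local measurement basis, the statement follows immediately once this fixed-basis equivalence is in hand. First I would record the structural properties of $\Pi_j=\Pi^{i_1}_{A_1}\otimes\cdots\otimes\Pi^{i_N}_{A_N}$: each local factor is a one-dimensional projector onto an orthonormal basis vector, so $\sum_{i_k}\Pi^{i_k}_{A_k}=\Id_{A_k}$, whence $\sum_j\Pi_j=\Id$ and $\Pi_j\Pi_k=\delta_{jk}\Pi_j$. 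In particular $\Pi_j=|j\rangle\langle j|$ for the product basis vector $|j\rangle=|i_1\cdots i_N\rangle$, so that $\{\Pi_j\}$ is a full von Neumann measurement on $\Ham$.

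For the implication $[\rho,\Pi_j]=0\,(\forall j)\Rightarrow\Phi(\rho)=\rho$, I would insert the resolution of the identity on both sides of $\rho$, writing $\rho=(\sum_j\Pi_j)\,\rho\,(\sum_k\Pi_k)=\sum_{j,k}\Pi_j\,\rho\,\Pi_k$. Using $\rho\,\Pi_k=\Pi_k\,\rho$ together with orthogonality $\Pi_j\Pi_k=\delta_{jk}\Pi_j$ collapses every off-diagonal ($j\neq k$) term to zero, leaving $\rho=\sum_j\Pi_j\,\rho\,\Pi_j=\Phi(\rho)$.

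For the converse $\Phi(\rho)=\rho\Rightarrow[\rho,\Pi_j]=0$, the key observation is that $\Phi(\rho)=\sum_j\langle j|\rho|j\rangle\,|j\rangle\langle j|$ is \emph{manifestly diagonal} in the product basis $\{|j\rangle\}$, for \emph{any} input $\rho$. Hence the hypothesis $\Phi(\rho)=\rho$ forces $\rho$ itself to be diagonal in that basis; and a matrix diagonal in $\{|j\rangle\}$ commutes with each $\Pi_j=|j\rangle\langle j|$. Equivalently, the $\Pi_j$ are eigenprojectors of $\rho$, which is exactly the phrasing used in the paragraph preceding the statement.

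The step most in need of care is not a genuine obstacle but a hypothesis to track explicitly: the \emph{rank-one} (fine-grained) nature of the $\Pi_j$ is essential to the converse. A coarse-grained projector of rank greater than one can commute with $\rho$ without rendering $\rho$ diagonal inside the corresponding eigenspace, so the equivalence with classicality would fail; this is precisely why the measurement is taken to be a product of one-dimensional local projectors. I would also note that no spectral non-degeneracy of $\rho$ is needed — diagonality in $\{|j\rangle\}$ suffices, and degenerate eigenvalues merely make the eigenprojector decomposition non-unique without affecting the commutation relations.
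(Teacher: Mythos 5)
Your proof is correct, but the direction $\Phi(\rho)=\rho \Rightarrow [\rho,\Pi_j]=0$ is argued by a genuinely different route than the paper's. The paper follows Luo and evaluates the operator sum $\sum_j [\rho,\Pi_j][\rho,\Pi_j]^{\dagger} = \Phi(\rho^2)-\rho^2$, then notes that $\Phi(\rho)=\rho$ forces $\Phi(\rho^2)=\rho^2$, so the sum of positive semidefinite terms vanishes and each commutator must be zero. You instead observe that $\Phi(\rho)=\sum_j \langle j|\rho|j\rangle\,|j\rangle\langle j|$ is manifestly diagonal in the product basis, so the fixed point condition forces $\rho$ diagonal and the commutators vanish trivially. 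Your argument is more elementary and transparent, but it leans entirely on the rank-one (fine-grained) structure of the $\Pi_j$ — which you correctly flag as essential. The paper's identity-based route is what generalizes: the quantity $\Phi(\rho^2)-\rho^2$ (equivalently the Hilbert--Schmidt norms of the commutators) survives as a \emph{quantitative} gauge of nonclassicality and is the seed of the witness construction $\mathcal{W}=[\rho,\rho_{A_1}\otimes\cdots\otimes\rho_{A_N}]$ developed immediately afterward, which is presumably why the authors chose it. Your converse direction (resolution of the identity plus orthogonality, or equivalently the spectral decomposition $\rho=\sum_j p_j\Pi_j$) coincides with the paper's, and your remark about the quantifier over measurement bases and the irrelevance of spectral degeneracy is a useful clarification the paper leaves implicit.
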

\begin{proof}
If $\rho \in {\cal C}^N$ then $\Phi(\rho)=\rho$. Then, similarly as in Ref.~\refcite{Luo:08}, a direct 
evaluation of $\sum_j \left[\rho,\Pi_j\right]\,\left[\rho,\Pi_j\right]^{\dagger}$ yields
$\sum_j \left[\rho,\Pi_j\right]\,\left[\rho,\Pi_j\right]^{\dagger} = \Phi(\rho^2)-\rho^2$. 
However, $\Phi(\rho)=\rho$ also implies that $\Phi(\rho^2)=\rho^2$. Therefore, 
\begin{equation}
\sum_j \left[\rho,\Pi_j\right]\,\left[\rho,\Pi_j\right]^{\dagger} = 0\,.
\end{equation}
Hence, $\rho \in {\cal C}^N \Longrightarrow \left[\rho,\Pi_j\right]=0$. 
On the other hand, if $\left[\rho,\Pi_j\right]=0$ then $\{\Pi_j\}$ provides a 
basis of eigenprojectors of $\rho$. Then, from the spectral decomposition, we obtain
\begin{equation} 
\rho = \sum_j p_j \Pi_j = \sum_{i_1,\cdots,i_N} p_{i_1,\cdots,i_N} \Pi_{A_1}^{i_1} \otimes \cdots \Pi_{A_N}^{i_N}\, ,
\label{rho-sd}
\end{equation}
which immediately  implies that $\Phi(\rho)=\rho$. Hence $\left[\rho,\Pi_j\right]=0 \Longrightarrow \rho \in {\cal C}^N$.
\end{proof}

We can now propose a necessary condition to be obeyed for arbitrary multipartite classical states.

\begin{theorem}
Let $\rho$ be a classical state and $\rho_{A_i}$ the reduced density operator for the subsystem $A_i$. 
Then $\left[\rho,\rho_{A_1} \otimes \cdots \otimes \rho_{A_N}\right] = 0$.
\end{theorem}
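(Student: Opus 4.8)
The plan is to exploit the structure of classical states provided by the preceding Theorem. First I would invoke $\rho \in {\cal C}^N$ to write the spectral decomposition established in Eq.~(\ref{rho-sd}), namely $\rho = \sum_{i_1,\cdots,i_N} p_{i_1,\cdots,i_N}\, \Pi_{A_1}^{i_1} \otimes \cdots \otimes \Pi_{A_N}^{i_N}$. The whole argument then reduces to showing that $\rho_{A_1} \otimes \cdots \otimes \rho_{A_N}$ is diagonal in the \emph{same} orthogonal-projector basis $\{\Pi_j\}$ that already diagonalizes $\rho$; once both operators share a common eigenbasis, their commutator vanishes identically.

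The central step is therefore to compute each reduced operator $\rho_{A_k}$ and verify its diagonality in the local measurement basis $\{\Pi_{A_k}^{i_k}\}$. Taking the partial trace of the product form of $\rho$ over all subsystems but $A_k$, and using that the local projectors are orthonormal, each factor $\Pi_{A_m}^{i_m}$ with $m \neq k$ contributes only a scalar $\mathrm{Tr}\,\Pi_{A_m}^{i_m}$, so that I expect to obtain $\rho_{A_k} = \sum_{i_k} q^{(k)}_{i_k}\, \Pi_{A_k}^{i_k}$, with $q^{(k)}_{i_k}$ the marginal weights obtained by summing $p_{i_1,\cdots,i_N}$ over all indices but $i_k$. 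The crucial feature is that each $\rho_{A_k}$ is diagonal precisely in the local basis $\{\Pi_{A_k}^{i_k}\}$, inherited from the product structure of the eigenprojectors of $\rho$.

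Building the tensor product of these diagonal reduced operators then gives $\rho_{A_1} \otimes \cdots \otimes \rho_{A_N} = \sum_{i_1,\cdots,i_N} q^{(1)}_{i_1} \cdots q^{(N)}_{i_N}\, \Pi_{A_1}^{i_1} \otimes \cdots \otimes \Pi_{A_N}^{i_N} = \sum_j \tilde{p}_j\, \Pi_j$, which is manifestly diagonal in the $\{\Pi_j\}$ basis. Since $\rho = \sum_j p_j \Pi_j$ with the $\Pi_j$ mutually orthogonal, the two operators are simultaneously diagonalized by $\{\Pi_j\}$ and therefore commute, yielding $[\rho, \rho_{A_1} \otimes \cdots \otimes \rho_{A_N}] = 0$. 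The main obstacle I anticipate is purely a bookkeeping check at the partial-trace step: one must confirm that tracing out a factor never couples distinct local indices, which is guaranteed here by the product-and-orthogonality structure of $\rho$ enforced by Theorem~\ref{t1}, but would fail for a generic state lacking this special diagonal form.
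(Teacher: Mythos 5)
Your proposal is correct and follows essentially the same route as the paper: both invoke the spectral decomposition of Eq.~(\ref{rho-sd}) guaranteed by Theorem~\ref{t1}, observe that each marginal $\rho_{A_k}$ is then diagonal in the local projector basis $\{\Pi_{A_k}^{i_k}\}$, and conclude that $\rho$ and $\rho_{A_1}\otimes\cdots\otimes\rho_{A_N}$ share the common eigenbasis $\{\Pi_j\}$ and hence commute. You merely spell out the partial-trace bookkeeping that the paper compresses into ``by direct evaluation.''
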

\begin{proof}
From theorem~\ref{t1}, if $\rho \in {\cal C}^N$ then the spectral decomposition of $\rho$ yields Eq.~(\ref{rho-sd}). 
Therefore, 
\begin{equation}
\rho_{A_i} = \sum_{r} p_n \Pi_{A_i}^{n} \,.
\end{equation}
Hence, by direct evaluation, we obtain that $\left[\rho,\rho_{A_1} \otimes \cdots \otimes \rho_{A_N}\right] = 0$.
\end{proof}

Observe that, given a composite multipartite state $\rho$, it is rather simple to evaluate the commutator 
$\left[\rho,\rho_{A_1} \otimes \cdots \otimes \rho_{A_N}\right]$, with no extremization procedure as usually 
required by QD computation. From the necessary condition for classical states above, we can define 
a witness for nonclassicality by the norm of the operator $\left[\rho,\rho_{A_1} \cdots \rho_{A_N}\right]$. 
Indeed, if $\rho \in {\cal C}^N \Longrightarrow \| {\cal W} \| = 0$, where 
\begin{equation}
{\cal W} = \left[\rho,\rho_{A_1} \otimes \cdots \otimes \rho_{A_N}\right] \, .
\end{equation}
Therefore, $\| {\cal W} \| > 0$ is sufficient for nonclassicality. 

For concreteness, we will take $\| {\cal W} \|$  as defined by the trace norm, namely, 
\begin{equation}
\| {\cal W} \| = {\textrm{Tr}} \sqrt{ {\cal W} {\cal W}^\dagger} \, .
\end{equation}
In order to apply the witness $\| {\cal W} \|$ in the XY model, we will first consider the case of the thermal ($Z_2$-symmetric) 
ground state. For this case, a plot of $\| {\cal W} \|$ as a 
function of $h$ is provided in Fig.~\ref{f6-1} for $\gamma=0.6$. Observe that this plot shows that the only 
possible classical state appears for $h=\infty$. Indeed, this point corresponds to an infinite transverse magnetic field applied, 
which leads the system to a product state, with all spins individually pointing in the $z$ direction. 

However, besides large $h$, the ground state with SSB exhibits a further nontrivial factorization (product state) point at 
$ 
\gamma^2 + h^2 = 1.
$
This point identifies a change in the behavior of the correlation functions decay, which pass from
monotonically to an oscillatory decay~\cite{XYsol}. Besides, it has been realized
that, at this point, the product ground state is two-fold degenerated even for finite systems~\cite{Giampaolo:08,Ciliberti:10,Tomasello:11}.
\begin{figure}[ht]
\centering {\includegraphics[clip,scale=0.3]{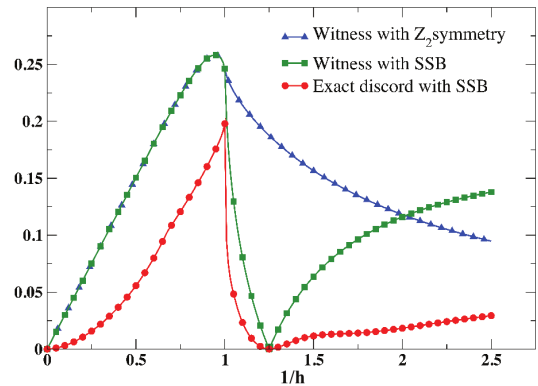}}
\caption{(Color online) Witness $\| {\cal W} \|$ and exact value of QD for a pair of spins 
in both symmetric and broken ground states of the XY spin-1/2 infinite chain as a function of $h$ for $\gamma=0.6$.
[Original plot from Ref.~38]}
\label{f6-1}
\end{figure}

On the other hand, if SSB is taken into account, which must be the case in the thermodynamic limit, this 
picture dramatically changes. Indeed, in the broken case, the two-spin reduced density matrix is more
complicated (see Eqs. (\ref{rhoAB}), (\ref{rho_elements_symbr})).
Tight lower and upper bounds for $a(h,\gamma)$ and $b(h,\gamma)$ are obtained in Ref.~\refcite{Oliveira}. In our plots, 
either of such bounds essentially yields the same curves, which led us to keep the results produced with 
the lower bound. Concerning the results for $\| {\cal W} \|$, we can compare the plots for the thermal ground  
state and the ground state with SSB also in Fig.~\ref{f6-1}. Observe that, in the case with SSB, the only point for 
which $\| {\cal W} \|=0$ is $1/h= 1.25$. Indeed, this corresponds to a classical state, as can 
be confirmed by the exact computation of QD. Since our witness identifies {\it classical-classical} states, 
we adopted the symmetric version of QD~\cite{Maziero-2:10,Rulli:11}. Remarkably, this classical state is associated 
with one of two doubly-degenerated factorized (fully product) ground state, which appears as a consequence of 
the $Z_2$ SSB. This unique classical point for the XY model is then clearly revealed by the witness evaluation. 
Another aspect of the witness observed from Fig.~\ref{f6-1} is that it displays nonanalyticity in its first derivative 
at the QPT  $h=1$. This phenomenon occurs both for the symmetric and broken ground states, 
which promotes $\| {\cal W} \|$  to a simple tool also to detect QPTs.

\section{Conclusion}
\label{section_conclusions}
We reviewed on the two-spin QD in the ground state of spin chains of the XYZ type in a transverse field. 
Close to QPTs, the anomalies of QD reflect the universality class of the Hamiltonian 
of the system. Therefore we can conclude that the scaling of QD and entanglement is identical with the 
known scaling of correlations of standard many body theory. In contrast, the range of quantum correlations 
at the critical point is non universal: entanglement is typically short-ranged (with scaling laws depending on the 
microscopic details of the system), while QD decays algebraically  (similarly as the standard correlation functions) 
following the universality paradigm~\cite{Maziero:12}. In particular, we note that QD is generically more robust 
than entanglement. This ultimately arises because QD catches quantum 
correlations in separable (mixed) state, where entanglement is vanishing. Therefore, at zero temperature, the 
QD can be sensible  at distances  were entanglement is vanishing. Along similar lines of reasoning, QD can survive at 
temperatures where the states are mixed enough to kill the entanglement (see Ref.~\refcite{Werlang:12} 
in this special issue). The investigation of multipartite measures of quantum correlations~\cite{Rulli:11,Dakic:10} 
as well as their impact in the characterization of QPTs is certainly a future challenge. Entanglement  is believed entering 
in its multipartite form close to the QPT; the two-spin entanglement is then a small amount of the entanglement at disposal, 
ultimately because of the monogamy property ~\cite{Osborne:02,Coffman:00}. On the other hand, it has already been shown that 
QD does not obey the usual monogamy relationship~\cite{Prabhu:12,Giorgi:11,Streltsov:12}. However, a generalized monogamy 
constraint can be established~\cite{Braga:12}. The implications of such monogamy for the behavior of multipartite QD at QPTs 
is object of current research. For studies of multipartite QD in spin models, see Refs.~\refcite{Rulli:11,Campbell:11}. 
Moreover, the application of {\it genuine} multipartite measures of QD (e.g., as defined by Refs.~\refcite{Braga:12,Maziero-2:12,Giorgi:12}) 
is also a further promising topic. 

At the factorizing field, the correlations do not depend on the spatial distance. Therefore all the QDs for different ranges 
crosses at $h=h_f$ (see the inset of Fig.\ref{qd}). The actual value of the discords is finite for thermal ground states and it is vanishing for ground states 
with symmetry breaking. This results since the factorized thermal states are mixed; in the case of  symmetry breaking the 
factorized two-spin state is pure, and therefore QD coincides with entanglement (they are both vanishing at $h_f$). 
Here we remark that QD  is a smooth function  close to the factorizing field (in contrast, entanglement displays certain 
discontinuity), and therefore it  can be analyzed  easly (see Fig.\ref{scaling}b)\cite{Tomasello:11,Tomasello:12}.  In fact the finite size scaling of 
the factorization phenomenon was achieved through the QD and not through the analysis of entanglement. We 
finally comment that the role of multipartite entanglement was demonstrated to be negligible close to the factorizing 
field~\cite{Amico:06}. Therefore the two-spin entanglement developed so far can provide a complete enough scenario of entanglement 
close to $h_f$.

To conclude, we mention that QD is also interesting from the experimental perspective. Pairwise non-classical correlations in 
highly mixed states were experimentally measured in NMR systems~\cite{Auccaise:11}. In that experimental protocol non linear 
witness operators (see section \ref{section_witness}) were employed to simplify  the optimization encoded in the discord. 
Interestingly enough the QD can be easly evaluated for spin cluster. This may be important to measure the 
non-classical correlation via neutron scattering\cite{Yurishchev:11}.

\vspace{-0.25cm}

\section*{Acknowledgements} We thank A. Hamma, D. Rossini, B. Tomasello, J. Maziero, L. C\'eleri, 
R. M. Serra, A. Saguia, and C. C. Rulli for discussions. Financial support from the Brazilian agencies 
CNPq (to M. S. S. and T. R. O.) and FAPERJ (to M. S. S.) is acknowledged. This work was performed as part
of the Brazilian National Institute for Science and Technology of Quantum Information (INCT-IQ). 
The  research of L. A. was supported in part by Perimeter Institute for Theoretical Physics. Research at 
Perimeter Institute is supported by the Government of Canada through Industry Canada and by Province of 
Ontario through the Ministry of Economic Development $\&$ Innovation.

\end{document}